\def\dnsitem{\vspace{2pt}\item}
\newcommand\wght{\omega}
\newcommand\calA{\mathcal{A}}
\newcommand\calP{\mathcal{P}}
\begin{document}

\title{Tight Approximation Bounds for the Seminar Assignment Problem 
\thanks{This work is supported in part by grants from NSF CNS 1302563, by Navy N00014-16-1-2151, by NSF CNS 1035736, by NSF CNS 1219064. Any opinions, findings, and conclusions or recommendations expressed here are those of the authors and do not necessarily reflect the views of sponsors.}
}

\author{Amotz Bar-Noy  \and George Rabanca }
\institute{Department of Computer Science, Graduate Center, CUNY, New York, USA\\
}

\maketitle

\abstract{The seminar assignment problem is a variant of the generalized assignment problem in which items have unit size and the amount of space allowed in each bin is restricted to an arbitrary set of values.  The problem has been shown to be NP-complete and to not admit a PTAS.  However, the only constant factor approximation algorithm known to date is randomized and it is not guaranteed to always produce a feasible solution.  

In this paper we show that  a natural greedy algorithm outputs a solution with value within a factor of $(1 - e^{-1})$ of the optimal, and that unless $NP\subseteq DTIME(n^{\log\log n})$, this is the best approximation guarantee achievable by any polynomial time algorithm.
}

\keywords{general assignment $\cdot$ budgeted maximum coverage $\cdot$ seminar assignment problem}


\section{Introduction}
In the \textsc{Seminar Assignment} problem (SAP) introduced in \cite{Krumke13} one is given a set of seminars (or bins) $B$, a set of students (or items) $I$, and for each seminar $b$ a set of integers $K_b$ specifying the allowable number of students that can be assigned to the seminar. Unless otherwise specified, we assume that $0 \in K_b$ for any $b \in B$.   For each student $i$ and seminar $b\in B$ let $p(i, b) \in \mathbb{R}$ represent the profit generated from assigning student $i$ to seminar $b$.  A {\em seminar assignment} is a function $\calA: J \rightarrow B$ where $J\subseteq I$ and we say that the assignment is feasible if  $|\calA^{-1}(b)| \in K_b$ for all $b\in B$, where $\calA^{-1}$ is the pre-image of $\calA$.  The goal is to find a feasible seminar assignment $\calA$ that maximizes the total profit: 
$$
p(\calA) = \sum_{i\in J} p(i, \calA(i)).
$$

The problem has been introduced in \cite{Krumke13} in a slightly less general version.  In the original version, for each $b\in B$ the set $K_b$ equals to $\{0\} \cup \{l_b, ..., u_b\}$ for some lower and upper bounds $l_b, u_b \in \mathbb{N}$.  The more general setting considered in this paper can be useful for example when a seminar doesn't just require a minimum number of students and has a fixed capacity, but in addition requires students to work in pairs and therefore would allow only an even number of students to be registered.   In addition, this generalization also simplifies notation.

SAP is a variant of the classic \textsc{General Assignment} problem (GAP) in which one is given $m$  bins with capacity $B_1, ..., B_m$ and $n$ items.  Each item $i$ has size $s(i, b)$ in bin $b$ and yields profit $p(i, b)$.  The goal is to find a packing of the items into the bins that maximizes total profit, subject to the constraint that no bin is overfilled.  A GAP instance with a single bin is equivalent to the knapsack problem, and a GAP instance with unit profit can be interpreted as a decision version of the bin packing problem: can all items be packed in the $m$ bins?

SAP is also related to the \textsc{Maximum Coverage} problem (MC).  In the classic version of the MC problem one is given a collection of sets $\mathcal{S} = \{S_1, ..., S_m\}$ and a budget $B$.  The goal is to select a subcollection $\mathcal{S}' \subseteq \mathcal{S}$ with cardinality less than or equal to $B$ such that $|\cup_{S \in \mathcal{S}'} S|$ is maximized.

The algorithms with the best approximation ratio for both MC and GAP are greedy algorithms and the approximation bounds have been proved with similar techniques.  In this paper we show how to extend these analysis techniques to SAP.

\subsubsection{Related Work.}
In \cite{Krumke13} the authors show that SAP is NP-complete even when $K_b = \{0, 3\}$ for all $b\in B$ and $p(i, b) \in \{0, 1\}$ for any $i\in I$.  Moreover, they show that SAP does not admit a PTAS by providing a gap-preserving reduction from the 3-bounded 3-dimensional matching problem.  In \cite{Bender13} the authors investigate the approximability of the problem and provide a randomized algorithm which they claim outputs a solution that in expectation has value at least $1/3.93$ of the  optimal.  In \cite{Bender13_2} this result is revised and the authors show that for any $c \geq 2$, their randomized algorithm outputs a feasible solution with probability at least $1 - \min\{\frac{1}{c}, \frac{e^{c-1}}{c^c}\}$ and has an approximation ratio of $\frac{e-1}{(2c-1)\cdot e}$.

The GAP is well studied in the literature, with \cite{CATTRYSSE92} and \cite{Kundakcioglu2009} surveying the existing algorithms and heuristics for multiple variations of the problem.  In \cite{Shmoys93} the authors provide a $2$-approximation algorithm for the problem and in \cite{Cohen06} it is shown that any $\alpha$-approximation algorithm to the knapsack problem can be transformed into a $(1 + \alpha)$-approximation algorithm for GAP.  In \cite{Fleischer06} tight bounds for the GAP are given showing that no polynomial time algorithm can guarantee a solution within a factor better than $(1 - e^{-1})$, unless $P = NP$, and  providing an LP-based approximation which for any $\epsilon > 0$ outputs a solution with profit within a $(1 - e^{-1}-\epsilon)$ factor of the optimal solution value.

The GAP with minimum quantities, in which a bin cannot be used if it is not packed at least above a certain threshold, is introduced in \cite{Krumke13}.  Because items have arbitrary size, it is easy to see that when a single bin is given and the lower bound threshold equals the bin capacity, finding a feasible solution with profit greater than zero is equivalent to solving \textsc{Subset Sum}.  Therefore, in its most general case the problem cannot be approximated in polynomial time, unless $P = NP$.

In \cite{Nemhauser1981} and \cite{Conforti84} the authors study the problem of maximizing a non-decreasing submodular function $f$ satisfying $f(\emptyset) = 0$ under a cardinality constraint.  They show that a simple greedy algorithm achieves an approximation factor of $(1 - e^{-1})$ which is the best possible under standard assumptions.  Vohra and Hall note that the classic version of the maximum coverage problem belongs to this class of problems \cite{Vohra93}.  When each set $S_i$ in the MC problem is associated with a cost $c(S_i)$ the \textsc{Budgeted Maximum Coverage} problem asks to find a collection of sets $\mathcal{S}'$ covering the maximum number of elements under the (knapsack) constraint that $\sum_{S_i \in \mathcal{S}'} c(S_i) \leq B$ for some budget $B \in \mathbb{R}$.  In \cite{Khuller99} the authors show that the greedy algorithm combined with a partial enumeration of all solutions with small cardinality also achieves a $(1 - e^{-1})$ approximation guarantee, and provide matching lower bounds which hold even in the setting of the classic MC problem (when all sets have unit cost).  In \cite{Sviridenko04} Sviridenko generalizes the algorithm and proof technique to show that maximizing any monotone submodular function under a knapsack constraint can be approximated within $(1 - e^{-1})$ as well.

\subsubsection{Contributions.}
In Section \ref{sec:hardness}, by a reduction from the \textsc{Maximum Coverage} problem, we show that there exists no polynomial time algorithm that guarantees an approximation factor larger than $(1 - e^{-1})$, unless $NP\subseteq DTIME(n^{\log\log n})$.  
In Section \ref{sec:greedy} we present a  greedy algorithm that outputs a solution that has profit at least $\frac{1}{2}\cdot(1 - e^{-1})$ of the optimal solution.  The algorithm is based on the observation that when the required number of students in each seminar is fixed, the problem is solvable in polynomial time.  Finally, in Section \ref{sec:greedy_plus} we show how this algorithm can be improved to guarantee an approximation bound of $(1 - e^{-1})$. 

\section{Hardness of Approximation}\label{sec:hardness}
In this section we show that the problem is hard to approximate within a factor of $(1 - e^{-1} + \epsilon)$, $\forall \epsilon > 0$, even for the case when for each $b\in B$ the set $K_b$ equals $\{0, n\}$ for some integer $n$, and the profit for assigning any student to any seminar is either $0$ or $1$.  We prove this result by showing that such restricted instances of SAP are as hard to approximate as the \textsc{Maximum Coverage} problem defined below.

\begin{definition}
Given a collection of sets $\mathcal{S} = \{S_1, ..., S_m\}$ and an integer $k$, the \textsc{Maximum Coverage} (MC) problem is to find a collection of sets $\mathcal{S}' \subseteq \mathcal{S}$ such that $|\mathcal{S}'| \leq k$ and the union of the sets in $\mathcal{S}'$ is maximized.
\end{definition}

In \cite{Khuller99} it is shown that the MC problem is hard to approximate within a factor of $(1 - e^{-1} + \epsilon)$, unless $NP\subseteq DTIME(n^{\log\log n})$.  We use this result to prove the following:

\begin{theorem}
For any $\epsilon > 0$ the SAP is hard to approximate within a factor of $(1 - e^{-1} + \epsilon)$ unless $NP\subseteq DTIME(n^{\log\log n})$.
\end{theorem}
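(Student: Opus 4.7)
The plan is to give an approximation-preserving reduction from the \textsc{Maximum Coverage} problem to SAP. Since \cite{Khuller99} proves that MC is $(1 - e^{-1} + \epsilon)$-hard to approximate under the stated complexity assumption---even on instances where every input set has a common size---such a reduction immediately transfers the hardness to SAP.

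Given an MC instance $(\mathcal{S} = \{S_1, \ldots, S_m\}, k)$ with universe $U$ and common set size $s$, I would build the following SAP instance. Introduce one seminar $b_j$ per input set $S_j$ and set $K_{b_j} = \{0, s\}$, so that an ``opened'' seminar must contain exactly $s$ students. Let the student set be $I = U \cup D$, where $D$ consists of $ks - |U|$ dummy students chosen so that $|I| = ks$ exactly; the Feige-style hard instances for MC satisfy $|U| \leq ks$, making this padding well-defined. Finally, set $p(u, b_j) = 1$ if $u \in U \cap S_j$ and $p(u, b_j) = 0$ otherwise, so in particular dummy students contribute no profit anywhere. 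Note that every $K_{b_j}$ has the form $\{0, n\}$ and every profit lies in $\{0, 1\}$, matching the restricted setting claimed in the theorem.

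It then remains to argue that $\mathrm{OPT}_{SAP} = \mathrm{OPT}_{MC}$ and that $\alpha$-approximations translate with no loss. Because the total student supply is $ks$ and each opened seminar consumes exactly $s$ students, any feasible SAP solution opens at most $k$ seminars; its profit is bounded above by $|\bigcup_{j:\, b_j \text{ opened}} S_j|$, the MC value of the corresponding set selection, since every real student contributes at most $1$ and only when placed in a seminar containing it. Conversely, any MC selection of at most $k$ sets is realized as a SAP assignment of equal profit by placing each $u$ in the union into some containing seminar and topping up each opened seminar with dummies. The subtle step I expect to need care is exactly this encoding: SAP has no direct cardinality constraint on the number of opened bins, so MC's budget $k$ must be forced indirectly through the coupling of uniform per-seminar demand $s$ with a tightly rationed supply of $ks$ students---which is why the reduction requires uniform set sizes in the source instance and calibrated dummy padding to make $|I|$ exactly $ks$.
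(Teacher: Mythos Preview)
Your reduction is correct and follows the same idea as the paper: encode the MC budget $k$ indirectly by giving every seminar a fixed required size and rationing the total student supply so that at most $k$ seminars can be opened. The one notable difference is the parameter choice. You set the required seminar size to the common set size $s$, which forces you to invoke the fact that the Feige/Khuller hard instances have uniform set sizes and satisfy $|U|\le ks$ so that the dummy padding is well-defined. The paper instead sets every seminar's required size to $n=|U|$ and adds exactly $n(k-1)$ dummies, giving $nk$ students in total; this works for \emph{arbitrary} MC instances with no structural assumptions on the sets, and the padding count is always nonnegative by construction. Both routes yield the same hardness, but the paper's parameterization is slightly cleaner because it sidesteps the need to appeal to properties of the specific hard instances.
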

\begin{proof}
To prove the theorem we create a SAP instance for any given MC instance and show that from any solution of the SAP instance we can create a solution for the MC instance with at least equal value, and that the optimal solution of the SAP instance has value at least equal to the optimal solution of the MC instance.  Therefore, an $\alpha$-approximation algorithm for SAP can be transformed into an $\alpha$-approximation algorithm for MC.

Given a MC instance, let $U = \cup_{S \in \mathcal{S}} S$ and $n = |U|$.  For each set $S\in \mathcal{S}$ let $b_S$ be a seminar with the allowable number of students $K_b = \{0, n\}$, and for each element $e \in U$ let $i_e$ be a student in $I$.  The profit of a student $i_e$ assigned to a seminar $b_S$ is $1$ if the element $e$ belongs to the set $S$ and $0$ otherwise.  In addition, let $d_1, ..., d_{n*(k-1)}$ be dummy students that have profit $0$ for any seminar.

We first show that any feasible assignment $\calA$ corresponds to a valid solution to the given MC instance.  Since every seminar requires exactly $n$ students and there are exactly $k\cdot n$ students available, clearly at most $k$ seminars can be assigned students in any feasible assignment.  Let $\mathcal{S}' = \{S \in \mathcal{S}: \calA(b_S) > 0\}$.  It is easy to see that the number of elements in $\cup_{S \in S'} S$ is at least equal to the profit $p(\calA)$ since a student $i_e$ has profit $1$ for a seminar $b_S$ only if the set $S$ covers element $e$.

It remains to show that for any solution to the MC instance there exists a solution to the corresponding SAP instance with the same value.  Fix a collection of sets  $\mathcal{S}' \subseteq \mathcal{S}$ with $|\mathcal{S}'| \leq k$.  For every $e\in \cup_{S \in S'} S$ let $S_e$ be a set in $\mathcal{S}'$ that contains $e$ and let $\calA(i_e) = b_{S_e}$.  Then, assign additional dummy students to any seminar with at least one student to reach the required $n$ students per seminar.   Clearly, the profit of the assignment $\calA$ is equal to the number of elements covered by the collection $\mathcal{S}'$, which proves the theorem.
\end{proof}

\section{Seminars of Fixed Size}
In this section we show that when the allowable number of students that can be assigned to any seminar $b$ is a set $K = \{0, k_b\}$ for some integer $k_b$, SAP can be approximated within a factor of $(1- e^{-1})$ in polynomial time.  This introduces some of the techniques used in the general case in a simpler setting.

For an instance of the SAP, a {\em seminar selection} is a function $S:B \rightarrow \mathbb{N}$ with the property that $S(b) \in K_b$ for any $b\in B$.  We say that $S$ is feasible if $\sum_{b \in B} S(b) \leq |I|$.  In other words, a seminar selection is a function that maps each seminar to the number of students to be assigned to it.  A seminar selection $S$ {\em corresponds} to an assignment $\calA$ if for any seminar $b$ the number of students assigned by $\calA$ to $b$ is $S(b)$.  We slightly abuse notations and denote by $p(S)$ the maximum profit over all seminar assignments corresponding to the seminar selection $S$; we call $p(S)$ the profit of $S$.  In the remainder of this paper for a graph $G = (V, E)$ we denote the subgraph induced by the vertices of $X\subseteq V$ by $G[X]$.

\begin{definition}
Given a SAP instance let $V_b=\{v_{b,1}, ..., v_{b,k_b}\}$ for every $b\in B$ and let $V = \cup_{b\in B} V_b$.  The {\em bipartite representation} of the instance is the complete bipartite graph  $G=(V\cup I, E)$ with edge weights $\wght(v_b, i) = p(i, b)$ for every $v_b \in V_b$.  The {\em bipartite representation} of a seminar selection $S$ is the graph $G[V_S\cup I]$ where $V_S = \cup_{b \in B} V_{S, b}$ and $V_{S, b} =\{v_{b,1}, ..., v_{b,S(b)}\}$ for every $b\in B$.
\end{definition}

\begin{lemma}\label{lem:opt_selection}
For any SAP instance and any feasible seminar selection $S$, $p(S)$ is equal to the value of the maximum weight matching in the bipartite representation of $S$.
\end{lemma}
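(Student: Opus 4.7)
The plan is a direct bijective argument, establishing a weight-preserving correspondence between assignments $\calA$ corresponding to $S$ and matchings of $G[V_S\cup I]$ saturating $V_S$, then comparing the maxima on both sides. Feasibility of $S$ gives $|V_S|=\sum_b S(b)\le|I|$, and because $G$ is complete bipartite, a $V_S$-saturating matching always exists in $G[V_S\cup I]$; moreover a maximum-weight matching may be assumed to saturate $V_S$, since any unsaturated slot $v_{b,j}$ can be paired off with an unused student via an edge of non-negative weight without decreasing the total.

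For the ``$\le$'' direction, fix an assignment $\calA$ corresponding to $S$. For each seminar $b$ enumerate $\calA^{-1}(b)=\{i^b_1,\ldots,i^b_{S(b)}\}$ arbitrarily and let $M_\calA = \{(v_{b,j},i^b_j) : b\in B,\,1\le j\le S(b)\}$. Pairwise disjointness of the preimages makes $M_\calA$ a matching, and $|\calA^{-1}(b)|=S(b)$ makes it saturate $V_S$; by the definition $\wght(v_{b,j},i)=p(i,b)$ its total weight equals
\[
\sum_{b\in B}\sum_{j=1}^{S(b)}\wght(v_{b,j},i^b_j) \;=\; \sum_{i\in J} p(i,\calA(i)) \;=\; p(\calA).
\]
Choosing $\calA$ to realize $p(S)$ shows the maximum matching weight is at least $p(S)$.

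Conversely, let $M$ be a maximum-weight matching, assumed $V_S$-saturating by the reduction above, and set $\calA_M(i)=b$ whenever $(v_{b,j},i)\in M$ for some $j$. The matching property makes $\calA_M$ well defined on the matched students, and $V_S$-saturation yields $|\calA_M^{-1}(b)|=S(b)$, so $\calA_M$ corresponds to $S$; the same identity then shows that the weight of $M$ equals $p(\calA_M)\le p(S)$, completing the proof. The only step needing care is the saturation reduction, which is a routine consequence of $G$ being complete bipartite together with $|V_S|\le|I|$; everything else is bookkeeping from the definitions of $V_S$ and the edge weights.
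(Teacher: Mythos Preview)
Your argument is correct and is essentially the same as the paper's: both set up a value-preserving correspondence between assignments corresponding to $S$ and $V_S$-saturating matchings of $G_S$, and both invoke completeness of $G_S$ together with non-negative edge weights to conclude that some maximum-weight matching saturates $V_S$. You are simply more explicit about the two directions and the bookkeeping; the underlying idea is identical.
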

\begin{proof}
Let $G_S=(V_S \cup I, E)$ be the bipartite representation of $S$. First observe that any matching $M$ of $G_S$ that matches all the vertices of $V_S$ can be interpreted as an assignment $\calA_M$ of equal value by setting $\calA_M(i) = b$  whenever vertex $i\in I$ is matched by $M$ to a vertex in $V_{S,b}$.  Since $G_S$ is complete and has non-negative edge weights, there exists a maximum weight matching that matches all the vertices of $V_S$.

Similarly, any feasible assignment for the SAP instance can be interpreted as a matching $M_{\mathcal{A}}$ of equal value, which proves the lemma.
\end{proof}

\begin{definition} For a given finite set $A$, a set function $f:2^A \rightarrow \mathbb{R}$ is submodular if for any $X, Y \subseteq A$ it holds that: 
$$
f(X) + f(Y) \geq f(X \cup Y) + f(X \cap Y).  
$$
\end{definition}

Sviridenko shows that certain submodular functions can be maximized under knapsack constraints, which will be useful in proving Theorem \ref{thm:fixed_size}:

\begin{theorem}[\cite{Sviridenko04}]\label{thm:max_submod}
Given a finite set $A$, a submodular, non-decreasing, non-negative, polynomially computable function $f:2^A \rightarrow \mathbb{R}$, a budget $L \geq 0$, and costs $c_a \geq 0$, $\forall a \in A$, the following optimization problem is approximable within a factor of $(1-e^{-1})$ in polynomial time: 
$$
\max_{X\subseteq A}\left\{f(X): \sum_{x\in X} c_x \leq L\right\}
$$
\end{theorem}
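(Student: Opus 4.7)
The plan is to follow the partial-enumeration plus greedy strategy of Sviridenko, which generalizes the approach of Khuller, Moss, and Naor (originally developed for budgeted maximum coverage) to arbitrary monotone submodular objectives. The algorithm combines an outer enumeration over small ``seed'' subsets of $A$ with an inner greedy continuation that repeatedly adds the element of largest cost-normalized marginal gain.

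More concretely, I would iterate over every subset $X_0 \subseteq A$ with $|X_0| \leq 3$ satisfying $\sum_{a \in X_0} c_a \leq L$. For each such $X_0$, initialize $X := X_0$ and then repeatedly pick the element $a \in A \setminus X$ maximizing $(f(X \cup \{a\}) - f(X))/c_a$ among those still fitting in the budget, stopping when no such element exists. The output is the $X$ of largest $f$-value across all seeds. This runs in polynomial time because there are $O(|A|^3)$ seeds and each greedy continuation performs at most $|A|$ iterations, with $f$ evaluated polynomially many times per iteration.

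For the analysis, let $O$ be an optimal solution. If $|O| \leq 3$, the enumeration recovers $O$ exactly, so assume $|O| \geq 4$ and focus on the iteration whose seed $X_0$ consists of three particular elements of $O$ identified by running greedy on the universe $O$ itself with no budget restriction. Letting $X_t$ denote the set after $t$ greedy additions beyond $X_0$ and $a_{t+1}$ the element chosen at step $t$, submodularity yields $f(O) - f(X_t) \leq \sum_{a \in O \setminus X_t} \Delta(a \mid X_t)$, where $\Delta(a \mid X) = f(X \cup \{a\}) - f(X)$. Combining this with the greedy selection rule $\Delta(a_{t+1} \mid X_t)/c_{a_{t+1}} \geq \Delta(a \mid X_t)/c_a$ for every still-feasible $a$, together with the feasibility bound $\sum_{a \in O} c_a \leq L$, one derives the recurrence
\[
f(O) - f(X_{t+1}) \leq \Bigl(1 - \frac{c_{a_{t+1}}}{L}\Bigr)\bigl(f(O) - f(X_t)\bigr).
\]
Telescoping and using $\prod_t (1 - c_{a_{t+1}}/L) \leq \exp(-\sum_t c_{a_{t+1}}/L)$ together with $\sum_t c_{a_{t+1}} \leq L$ gives $f(O) - f(X_T) \leq e^{-1} f(O)$ in the idealised case where greedy exhausts the budget exactly.

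The hard part is the termination case where greedy halts because the next element of largest ratio would overshoot the budget; here the naive argument loses a factor of two. This is precisely why the three-element enumeration is indispensable: committing to three carefully chosen elements of $O$ as the seed bounds the marginal contribution of every remaining element of $O$ by a small fraction of $f(O)$, so the single element that ``did not fit'' when greedy stopped cannot by itself account for more than a controlled loss. Folding that bound into the recurrence above closes out to the tight $(1 - e^{-1})$ guarantee stated in the theorem; without the enumeration one only obtains the weaker $\frac{1}{2}(1 - e^{-1})$ ratio.
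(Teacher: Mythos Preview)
The paper does not prove this theorem at all: it is quoted verbatim as a black-box result from \cite{Sviridenko04} and used only as a tool inside the proof of Theorem~\ref{thm:fixed_size}. There is therefore no ``paper's own proof'' to compare against.

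Your sketch is in fact a faithful outline of Sviridenko's original argument---partial enumeration over all feasible seeds of size at most three, followed by the ratio-greedy continuation, with the analysis hinging on the telescoping recurrence $f(O)-f(X_{t+1})\le(1-c_{a_{t+1}}/L)(f(O)-f(X_t))$ and the observation that the seed bounds the marginal value of any single leftover optimum element. As a sketch it is accurate; the one place where you are deliberately vague (``folding that bound into the recurrence\ldots closes out'') is exactly where Sviridenko's full proof does the real work, so if you were asked to produce a complete proof you would still owe the careful accounting of the stopping step. But relative to this paper, the theorem is simply a citation, and nothing more is expected.
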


We relate now the value of a maximum weight matching in a bipartite graph to the notion of submodularity.  
\begin{definition}
For an edge weighted bipartite graph $G = (A \cup B, E)$, the {\em partial maximum weight matching function} $f:2^A \rightarrow \mathbb{R}$ maps any set $S\subseteq A$ to the value of the maximum weight matching in $G[S\cup B]$.
\end{definition}

\begin{lemma}\label{lemma:submod}
Let $f$ be the partial maximum weight matching function for a bipartite graph $G = (A \cup B, E)$ with non negative edge weights.  Then $f$ is submodular.
\end{lemma}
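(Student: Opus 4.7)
The plan is to establish submodularity in its direct form $f(X) + f(Y) \geq f(X \cup Y) + f(X \cap Y)$ via a matching-exchange argument. I will fix maximum-weight matchings $M_{\cup}$ in $G[(X \cup Y) \cup B]$ and $M_{\cap}$ in $G[(X \cap Y) \cup B]$, realizing the right-hand side, and then redistribute the edges of the multiset union $M_{\cup} \cup M_{\cap}$ into two edge sets $M_X$ and $M_Y$. The goal is to arrange that $M_X$ is a matching in $G[X \cup B]$ and $M_Y$ is a matching in $G[Y \cup B]$ with $w(M_X) + w(M_Y) = w(M_{\cup}) + w(M_{\cap})$; since $f(X) \geq w(M_X)$ and $f(Y) \geq w(M_Y)$, the desired inequality follows.

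The structural observation that makes this possible is that the multigraph $H = M_{\cup} \cup M_{\cap}$ has maximum degree two (each matching contributes at most one edge at any vertex), so its components are vertex-disjoint paths and cycles whose edges alternate between $M_{\cup}$ and $M_{\cap}$. I will assign the edges of each component to $M_X$ or $M_Y$ in an alternating fashion; this automatically produces two matchings because no vertex receives two same-coloured edges, and by construction the total weight is preserved. The remaining question is whether the colouring can also respect the side constraint: the $A$-endpoints of $M_X$ must lie in $X$, and those of $M_Y$ in $Y$.

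I expect the main obstacle to be verifying that a compatible colouring always exists. Every $A$-vertex of degree two in $H$ is incident to an edge of $M_{\cap}$ and hence lies in $X \cap Y$, so any colour is admissible at it; the only constrained $A$-vertices are path endpoints whose unique incident edge comes from $M_{\cup}$ and happens to lie in $X \setminus Y$ or $Y \setminus X$. Cycles have no endpoints and are unconstrained, and a path with at most one constrained $A$-endpoint is handled by choosing the starting colour appropriately. The delicate case is a path with two constrained $A$-endpoints, which I will rule out by parity: both endpoints lie on the same side of the bipartition, so the path has even length, but the alternation $M_{\cup}, M_{\cap}, M_{\cup}, \dots$ along an even-length path forces the first and last edges into different matchings, contradicting that both endpoint edges belong to $M_{\cup}$. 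This impossibility shows that every component admits a valid colouring, which completes the proof.
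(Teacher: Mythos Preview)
Your proposal is correct and follows essentially the same route as the paper. Both proofs take maximum matchings $M_\cup$ and $M_\cap$, decompose their union into alternating paths and cycles, and redistribute the edges component-by-component into matchings for $G[X\cup B]$ and $G[Y\cup B]$; the paper phrases the per-component choice via the partition $\mathcal{P}_X,\mathcal{P}_Y$ while you phrase it as choosing a starting colour, but these are the same operation, and the decisive parity argument (an alternating path with both endpoints in $A$ has even length, forcing its first and last edges into different matchings, so it cannot have two endpoints outside $X\cap Y$) is identical to the paper's Claim~1.
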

\begin{proof}
Fix two sets $X, Y \subseteq A$ and let $M_\cap$ and $M_\cup$ be two matchings for the graphs $G[(X\cap Y) \cup B]$ and $G[(X \cup Y) \cup B]$ respectively.  To prove the lemma it is enough to show that it is possible to partition the edges in $M_\cap$ and $M_\cup$ into two disjoint matchings $M_X$ and $M_Y$ for the graphs $G[X\cup B]$ and $G[Y\cup B]$ respectively.

The edges of $M_\cap$ and $M_\cup$ form a collection of alternating paths and cycles.  Let $\mathcal{C}$ denote this collection and observe that no cycle of $\mathcal{C}$ contains vertices from $X\setminus Y$ or $Y\setminus X$.   This holds because $M_\cap$ does not match those vertices.

Let $\mathcal{P}_X$ be the set of paths in $\mathcal{C}$ with at least one vertex in $X \setminus Y$ and let $\mathcal{P}_Y$ be the set of paths in $\mathcal{C}$ with at least one vertex in $Y \setminus X$.  Two such paths are depicted in Fig. \ref{fig:submodular}.

{\em Claim 1.}
$\mathcal{P}_X \cap \mathcal{P}_Y = \emptyset$.  

{\em Proof of claim:}
Assume by contradiction that there exists a path $P \in \mathcal{P}_X \cap \mathcal{P}_Y$.  Let $x$ be a vertex in $X\setminus Y$ on path $P$ and similarly let $y$ be a vertex in $Y \setminus X$ on path $P$.  Observe that since neither $x$ nor $y$ belong to $X \cap Y$ they do not belong to the matching $M_\cap$ by definition, and therefore they are the endpoints of the path $P$.  Moreover, since both $x$ and $y$ are in $A$, the path $P$ has even length and since it is an alternating path, either the first or last edge belongs to $M_\cap$. Therefore $M_\cap$ matches either $x$ or $y$ contradicting its definition.  \qed

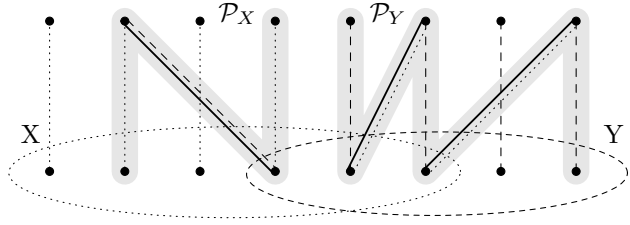
\begin{figure}
\captionsetup{width=.84\textwidth}
\center
\begin{tikzpicture}[line cap=round,line join=round,>=triangle 45,x=1.0cm,y=1.0cm]
\clip(0.0,0.0) rectangle (10.0,3.4);
\draw [color=black!10,line width=10pt] (2, 1)-- (2, 3)-- (4, 1)-- (4, 3);
\draw [color=black!10,line width=10pt] (5, 3)-- (5, 1)-- (6, 3)-- (6,1)-- (8, 3)-- (8, 1);

\draw [rotate around={0.19488285223860294:(3.4600000000000053,0.9900000000000002)},dotted] (3.4600000000000053,0.9900000000000002) ellipse (3.0016901159056184cm and 0.6053458118509876cm);
\draw [rotate around={0.23196545129820545:(6.150000000000015,0.9700000000000001)},dash pattern=on 2pt off 2pt] (6.150000000000015,0.9700000000000001) ellipse (2.531702775321735cm and 0.5554448150552177cm);
\draw [line width=.7pt] (3.95,1.0)-- (1.95,3.0);
\draw [dashed] (4.05,1.0)-- (2.05,3.0);
\draw [line width=.7pt] (4.95,1.0)-- (5.95,3.0);
\draw [dotted] (5.05,1.0)-- (6.05,3.0);
\draw [line width=.7pt] (5.95,1.0)-- (7.95,3.0);
\draw [dotted] (6.07,1.0)-- (8.07,3.0);

\draw [dotted] (1.0,3.0)-- (1.0,1.0);
\draw [dotted] (2.0,3.0)-- (2.0,1.0);
\draw [dotted] (3.0,3.0)-- (3.0,1.0);
\draw [dotted] (4.0,3.0)-- (4.0,1.0);
\draw [dashed] (5.0,3.0)-- (5.0,1.0);
\draw [dashed] (6.0,3.0)-- (6.0,1.0);
\draw [dashed] (7.0,3.0)-- (7.0,1.0);
\draw [dashed] (8.0,3.0)-- (8.0,1.0);
\begin{scriptsize}
\draw [fill=black] (1.0,1.0) circle (1.5pt);
\draw [fill=black] (1.0,3.0) circle (1.5pt);
\draw [fill=black] (2.0,1.0) circle (1.5pt);
\draw [fill=black] (2.0,3.0) circle (1.5pt);
\draw [fill=black] (3.0,1.0) circle (1.5pt);
\draw [fill=black] (3.0,3.0) circle (1.5pt);
\draw [fill=black] (4.0,1.0) circle (1.5pt);
\draw [fill=black] (4.0,3.0) circle (1.5pt);
\draw [fill=black] (5.0,1.0) circle (1.5pt);
\draw [fill=black] (5.0,3.0) circle (1.5pt);
\draw [fill=black] (6.0,1.0) circle (1.5pt);
\draw [fill=black] (6.0,3.0) circle (1.5pt);
\draw [fill=black] (7.0,1.0) circle (1.5pt);
\draw [fill=black] (7.0,3.0) circle (1.5pt);
\draw [fill=black] (8.0,1.0) circle (1.5pt);
\draw [fill=black] (8.0,3.0) circle (1.5pt);
\draw (.75, 1.5) node {\normalsize X};
\draw (8.5, 1.5) node {\normalsize Y};
\draw (5.5, 3.1) node {\normalsize $\mathcal{P}_Y$};
\draw (3.5, 3.1) node {\normalsize $\mathcal{P}_X$};
\end{scriptsize}
\end{tikzpicture}
\caption{$M_{X\cup Y}$ matches each vertex in $X\cup Y$ to the vertex directly above it.  $M_{X\cap Y}$ is depicted with contiguous segments, $M_{X}$ with dotted segments and $M_Y$ with dashed segments.  Two alternating paths of $\mathcal{P}$ are shown in light gray.}
\label{fig:submodular}
\end{figure}

For a set of paths $\mathcal{P}$ we let $E(\mathcal{P}) = \{e\in P: P\in \mathcal{P}\}$.
Moreover, let 
$$M_X = (E(\mathcal{P}_X) \cap M_\cup) \cup ( E(\mathcal{C} \setminus \mathcal{P}_X) \cap M_\cap)$$ 
and 
$$M_Y = (E(\mathcal{P}_X) \cap M_\cap) \cup ( E(\mathcal{C} \setminus \mathcal{P}_X) \cap M_\cup).$$
It is clear that $M_X \cup M_Y = M_\cap\cup M_\cup$  and $M_X \cap M_Y = M_\cap \cap M_\cup$.  To prove the theorem it remains to show that $M_X$ and $M_Y$ are valid matchings for $G[X\cup B]$ and $G[Y\cup B]$ respectively.
To see that $M_X$ is a valid matchings for $G[X\cup B]$ observe first that that no vertex of $Y \setminus X$ is matched by $M_X$ since $\mathcal{P}_X$ does not intersect $Y \setminus X$ by Claim 1, and $M_\cap$ does not intersect $Y \setminus X$ by definition.  Therefore, $M_X$ only uses vertices of $X \cup B$.  Second observe that every vertex $x\in X$ is matched by at most one edge of $M_X$ since otherwise $x$ belongs to either two edges of $M_\cup$ or two edges of $M_\cap$, contradicting the definition.  This proves that $M_X$ is a valid matching for $G[X\cup B]$;  showing that $M_Y$ is a valid matchings for $G[Y\cup B]$ is similar.
\end{proof}

\begin{theorem}\label{thm:fixed_size}
Any instance of SAP in which $|K_b|\leq 2$ for all $b \in B$ can be approximated in polynomial time to a factor of $(1-e^{-1})$.
\end{theorem}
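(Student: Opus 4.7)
The plan is to cast the restricted SAP instance as a submodular maximization problem under a knapsack constraint, and then invoke Sviridenko's result (Theorem \ref{thm:max_submod}). Since $|K_b|\leq 2$, every seminar $b$ is either unused or filled with exactly $k_b$ students, so a feasible solution is determined by a choice of a subset $T\subseteq B$ of ``activated'' seminars satisfying $\sum_{b\in T} k_b \leq |I|$, together with an optimal assignment of students into those seminars.

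For $T\subseteq B$ I would define $f(T)$ to be the maximum weight matching in the bipartite representation of the seminar selection $S_T$ that sets $S_T(b)=k_b$ for $b\in T$ and $S_T(b)=0$ otherwise. By Lemma \ref{lem:opt_selection}, $f(T)=p(S_T)$, so optimizing $f$ over feasible $T$ is exactly the restricted SAP problem. The function $f$ is non-negative (weights are non-negative), non-decreasing (enlarging the left side can only increase the matching value), and polynomially computable via bipartite max-weight matching. For submodularity, let $g:2^V\to\mathbb{R}$ be the partial maximum-weight-matching function of the full bipartite representation $G=(V\cup I,E)$, which is submodular by Lemma \ref{lemma:submod}. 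Because the vertex groups $V_b$ are pairwise disjoint, $V_{T_1\cup T_2}=V_{T_1}\cup V_{T_2}$ and $V_{T_1\cap T_2}=V_{T_1}\cap V_{T_2}$, so $f(T)=g(V_T)$ inherits submodularity directly from $g$.

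With costs $c_b=k_b$ and budget $L=|I|$, the constraint $\sum_{b\in T} c_b\leq L$ coincides with the feasibility condition on seminar selections. Theorem \ref{thm:max_submod} then yields in polynomial time a set $T$ with $f(T)\geq(1-e^{-1})\max_{T'\text{ feasible}} f(T')$, and converting $T$ to a student assignment via the max-weight matching in its bipartite representation produces a feasible SAP solution of profit $f(T)$. Since every feasible SAP assignment induces a feasible $T'$ with $p(\calA)\leq f(T')$, the optimum of the submodular program equals the optimum of SAP, giving the desired $(1-e^{-1})$-approximation.

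The only real obstacle is conceptual, namely noticing that the binary structure of $K_b$ collapses ``seminar selection'' to a subset-choice problem of the form handled by Sviridenko, and that $f$ is the correct submodular surrogate for the profit. Once this reduction is in place, submodularity follows cleanly from Lemma \ref{lemma:submod} via the disjointness of the blocks $V_b$, and the remaining verifications of the hypotheses of Theorem \ref{thm:max_submod} are routine.
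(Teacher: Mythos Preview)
Your proposal is correct and follows essentially the same route as the paper: define a set function on subsets of $B$ via the maximum-weight matching in the corresponding bipartite representation, deduce its submodularity from Lemma~\ref{lemma:submod} using the disjointness of the blocks $V_b$, and then invoke Sviridenko's theorem with costs $c_b=k_b$ and budget $|I|$. Your write-up is in fact slightly more explicit than the paper's about why submodularity transfers (via $V_{T_1\cup T_2}=V_{T_1}\cup V_{T_2}$ and $V_{T_1\cap T_2}=V_{T_1}\cap V_{T_2}$) and about the equivalence of the two optima, but the argument is the same.
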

\begin{proof}
Fix a SAP instance and for any $X\subseteq B$ let $S_X$ be the seminar selection which allocates $k_b$ students to any seminar in $S$ and $0$ students to any seminar in $B\setminus S$.  Moreover, let $G$ be the bipartite representation of the SAP instance and $f$ be the partial maximum weight matching function for graph $G$.  Denote by $G[V_X\cup I]$ the bipartite representation of $S_X$ and let $g(X)=f(V_X)$.  Since $f$ is submodular by Lemma \ref{lemma:submod}, it is easy to see that $g$ is submodular as well.  Assume by contradiction that there exist sets $X, Y \subseteq B$ such that the submodularity condition for $g$ doesn't hold: 
\begin{align}\label{eq:sap_submod}
g(X) + g(Y) < g(X\cup Y) + g(X \cap Y).
\end{align}
Therefore, by definition of $g$ we have
$$
f(V_X) + f(V_Y) < f(V_X\cup V_Y) + g(V_X \cap V_Y),
$$
contradicting the submodularity of $f$ proven in Lemma \ref{lemma:submod}.

Clearly $g$ is also monotone, non-negative and polynomially computable.  Let $c_b = k_b$, $\forall b \in B$, let $L = |I|$, and observe that $S_X$ is feasible if and only if $\sum_{x\in X} c_x \leq L$.  Moreover, by Lemma \ref{lem:opt_selection} and the definition of $g$, $g(X) = p(S_X)$ whenever the seminar selection $S_X$ is feasible and therefore the proof follows from Theorem \ref{thm:max_submod}.
\end{proof}

\section{A Constant Factor Greedy Algorithm}\label{sec:greedy}
The algorithm presented in this section sequentially increments the number of students allocated to each seminar in a greedy fashion.  It is similar in nature to the greedy algorithm of \cite{Khuller99} and \cite{Sviridenko04} but the details of the approximation guarantee proof are different.  
In the rest of this section we denote by $\calA_S$ an optimal assignment for the seminar selection $S$.  Remember that Lemma \ref{lem:opt_selection} shows that given feasible seminar selection $S$, an optimal seminar assignment $\calA_S$ can be found in polynomial time.

We say that a seminar selection $T$ is greater than a selection $S$ (denoted by $T \succ S$) if $T(b) \geq S(b)$, $\forall b\in B$, and there exists $b \in B$ s.t.  $T(b) > S(b)$.  The cost of a seminar selection $S$ is denoted by $c(S)$ and equals $\sum_{b\in B} S(b)$.  When $T \succ S$ we define the {\em marginal cost} of $T$ relative to $S$ as the difference between the cost of $T$ and the cost of $S$: 
$$
c_S(T) =  c(T) - c(S)
$$

Similarly, we define $p_S(T) = p(T) - p(S)$, the {\em marginal profit} of $T$ relative to $S$.  We say that $T$ is an {\em incrementing selection} for a seminar selection $S$ if $T \succ S$ and there exists a single seminar for which the selection $T$ allocates more students than selection $S$;  more precisely, the cardinality of the set $\{b\in B: T(b) > S(b)\}$ is $1$.  For a selection $S$ we denote the set of incrementing seminar selections that are feasible by $inc(S)$.  

We are now ready to present our algorithm: 

{\center
\fbox{
\begin{minipage}{280pt}
\noindent {\bf \textsc{Greedy}} 
\begin{enumerate}
\item $S_0 =$ initial seminar selection; 
\item $i = 0$;
\dnsitem While $inc(S_i) \neq \emptyset$: 
\begin{enumerate}
	\dnsitem $S_{i+1} \leftarrow \arg\max_{S' \in inc(S_i)} (p(S') - p(S_i)) / (c(S') - c(S_i))$;
	\dnsitem $i \leftarrow i + 1$
\end{enumerate}
\dnsitem $\calA_1 \leftarrow \calA_{S_i}$;
\dnsitem $\calA_2 \leftarrow $ maximum assignment to any single seminar $b$ for which $S_0(b) = 0$;
\dnsitem Return $\max{\calA_1, \calA_2}$;
\end{enumerate}
\end{minipage}}
}
\vspace{10pt}

In this section we analyze the algorithm starting from an empty initial seminar selection.  In the following section we show that by running the algorithm repeatedly with different initial seminar selections, the approximation guarantee can be improved.

Observe that the cardinality of $inc(S)$ is never greater than $|B| \cdot |I|$ and is therefore polynomial in the size of the input.  Thus, using the maximum weight matching reduction from the proof of Lemma \ref{lem:opt_selection}, step 3(a) of the algorithm can be performed efficiently.  

\begin{definition}
For a seminar selection $S$ and a tuple $(b, k_b)$ with $b \in B$ and $k_b \in \mathbb{N}$, let $S \oplus (b, k_b)$ denote the seminar selection $S'$ with $S'(b) = \max \{k_b, S(b)\}$ and $S'(b') = S(b')$ for any $b'\in B$, $b' \neq b$.
\end{definition}


\begin{lemma}\label{lem:difference}
For any feasible seminar selections $S$ and $T$, if for every seminar $b \in B$ the seminar selection $S \oplus (b, T(b))$ is feasible, then it holds that:
$$
\sum_{b \in B} \left[p(S \oplus (b, T(b))) - p(S) \right] \geq p(T) -  p(S).
$$
\end{lemma}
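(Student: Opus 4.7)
The plan is to recast the inequality as a submodularity statement about the partial maximum weight matching function $f$ introduced in Lemma \ref{lemma:submod}, and then invoke the standard ``disjoint chunks'' consequence of submodularity. First I would enlarge the bipartite representation so that both $S$ and $T$ live inside one vertex universe: let $V_b=\{v_{b,1},\dots,v_{b,|I|}\}$ for every $b\in B$ and $V=\cup_b V_b$, with edge weights $\wght(v,i)=p(i,b)$ for $v\in V_b$. For any feasible selection $R$, set $V_R=\cup_b\{v_{b,1},\dots,v_{b,R(b)}\}$; by Lemma \ref{lem:opt_selection} we have $p(R)=f(V_R)$, and by Lemma \ref{lemma:submod} the function $f$ is submodular. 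Non-negativity of the edge weights also makes $f$ monotone (non-decreasing).

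Next I would partition the vertices in $V_T\setminus V_S$ by seminar: let $W_b=V_{T,b}\setminus V_{S,b}$, which equals $\{v_{b,S(b)+1},\dots,v_{b,T(b)}\}$ when $T(b)>S(b)$ and is empty otherwise. The key observation is that $V_S\cup W_b$ is precisely $V_{S\oplus (b,T(b))}$, so by Lemma \ref{lem:opt_selection},
$$
p(S\oplus (b,T(b)))-p(S)=f(V_S\cup W_b)-f(V_S).
$$
Moreover, because $V_S\cup\bigcup_b W_b=V_S\cup V_T$, proving the lemma reduces to establishing
$$
\sum_{b\in B}\bigl[f(V_S\cup W_b)-f(V_S)\bigr]\;\geq\; f(V_T)-f(V_S).
$$

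The heart of the argument is the standard ``chunk'' form of submodularity: for any $X$ and pairwise disjoint $Y_1,\dots,Y_m$,
$$
f\!\Bigl(X\cup\bigcup_i Y_i\Bigr)-f(X)\;\leq\;\sum_i\bigl[f(X\cup Y_i)-f(X)\bigr].
$$
This is obtained by a short induction on $m$ from the diminishing-returns reformulation of submodularity (if $A\subseteq B$ then $f(A\cup Y)-f(A)\geq f(B\cup Y)-f(B)$), which I would either cite or spell out in a line. Since the $W_b$ live in disjoint parts $V_b$ of $V$, they are pairwise disjoint, so applying the inequality with $X=V_S$ and $Y_b=W_b$ yields $f(V_S\cup V_T)-f(V_S)\leq \sum_b[f(V_S\cup W_b)-f(V_S)]$. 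Finally, monotonicity of $f$ gives $f(V_S\cup V_T)\geq f(V_T)=p(T)$, and the lemma follows.

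The only mildly delicate step is the chunk-submodularity inequality, which is folklore but not stated earlier in the paper; I would include a one-sentence inductive justification. Everything else is a direct translation between seminar selections and their bipartite-representation vertex sets, combined with the two facts we already have in hand: submodularity of $f$ (Lemma \ref{lemma:submod}) and monotonicity from non-negative weights.
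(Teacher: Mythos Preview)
Your proposal is correct and takes a genuinely different route from the paper.

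The paper proves Lemma~\ref{lem:difference} by a direct combinatorial argument that essentially re-runs the alternating-path analysis of Lemma~\ref{lemma:submod}: it takes maximum matchings $M_S$ and $M_T$ in the bipartite representations of $S$ and $T$, decomposes $M_S\cup M_T$ into alternating paths and cycles, assigns each component to a single seminar $b$ via a map $q$, and then builds for every $b$ a matching $M_b$ that uses the $M_T$-edges of the components mapped to $b$ and the $M_S$-edges of the remaining components. This yields $\sum_b \wght(M_b)\ge \wght(M_T)+(|B|-1)\wght(M_S)$, and since each $M_b$ is a matching in the bipartite representation of $S\oplus(b,T(b))$ the inequality follows.

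Your argument instead \emph{reuses} Lemma~\ref{lemma:submod} rather than reproving it: you embed both $V_S$ and $V_T$ in a common universe, observe that $V_{S\oplus(b,T(b))}=V_S\cup W_b$ with $W_b=V_{T,b}\setminus V_{S,b}$ pairwise disjoint over~$b$, and then apply the standard ``chunk'' consequence of submodularity together with monotonicity of $f$ to conclude. This is shorter and conceptually cleaner, and it makes transparent that Lemma~\ref{lem:difference} is really just the diminishing-returns form of Lemma~\ref{lemma:submod} specialised to the seminar-indexed partition of $V_T\setminus V_S$. The paper's approach, on the other hand, is self-contained and would survive even without Lemma~\ref{lemma:submod} in hand; it also makes the role of the feasibility hypothesis on $S\oplus(b,T(b))$ more explicit, since the constructed $M_b$ visibly lives in the right induced subgraph. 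Both arguments rely on non-negativity of the edge weights---the paper implicitly (so that $p(S\oplus(b,T(b)))\ge \wght(M_b)$ via Lemma~\ref{lem:opt_selection}), you explicitly for monotonicity---so neither is more general on that front. The chunk inequality you invoke is indeed folklore; a one-line telescoping/induction justification is appropriate since it is not stated elsewhere in the paper.
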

\begin{proof}
 

For a fixed SAP instance let $G$ be its bipartite representation and let $G[V_S\cup I]$ and $G[V_T\cup I]$ be the bipartite representations of $S$ and $T$ respectively.  Moreover, let $M_S$ and $M_T$ be two maximum weight matchings in $G[V_S\cup I]$ and $G[V_T\cup I]$ respectively.  Remember that according to Lemma \ref{lem:opt_selection} it holds that $p(S) = \wght(M_S)$ and $p(T) = \wght(M_T)$.  To prove the lemma we create matchings $\mathcal{M} = \{M_b\}_{b\in B}$ for the bipartite representations of assignments $p(S \oplus (b, T(b))$, such that each edge of $M_T$ is used in exactly one of the matchings in $\mathcal{M}$ and each edge of $M_S$ is used in exactly $|B| - 1$ of the matchings in $\mathcal{M}$.

Let $\mathcal{C}$ be the collection of isolated components formed by the union of the edges of $M_S$ and $M_T$.  Since both $M_S$ and $M_T$ are matchings in $G$, each element of $\mathcal{C}$ is a path or cycle in $G$.  For every $b \in B$ let $\mathcal{P}_b = \{P \in \mathcal{C}: V(P) \cap V_b \cap (V(M_T) \setminus V(M_S)) \neq \emptyset\}$, where $V(P)$ denotes the vertices of component $P$ (Fig. \ref{fig:path_partition}).

{\em Claim 2.}  For any $a \neq b \in B$, $\calP_a \cap \calP_b = \emptyset$. 

{\em Proof of claim:}
To prove the claim, assume that there exist $P \in \calP_a \cap \calP_b$ for some $a \neq b \in B$.
   Then by definition there exist $v_a \in V_a$ and $v_b \in V_b$ such that $v_a, v_b \in V(P)$ and $v_a, v_b \notin V(M_S)$ and therefore $v_a$ and $v_b$ are the endpoints of the alternating path $P$.  Since neither of the endpoints of the path belong to $M_S$, $P$ must have an odd number of edges.  However, because both endpoints of $P$ belong to the same partition of the bipartite graph $G$, the path $P$ must have an even number of edges, hence the claim holds by contradiction. \qed

\begin{figure}[]
\center
\begin{tabular}{cccc}
\subfloat[]{
\definecolor{uququq}{rgb}{0.6,0.6,0.6}
\begin{tikzpicture}[line cap=round,line join=round,>=triangle 45,x=1.0cm,y=1.0cm,scale=.65]
\clip(-0,-0.4) rectangle (4.8,8.0);

\draw[color=black!10,line width=10pt]  (1.0,7)-- (4.0,7);
\draw[color=black!10,line width=10pt]  (1.0,6)-- (4.0,6)-- (1,5)-- (4,5)-- (1,4)-- (4,4);
\draw[color=black!10,line width=10pt]  (4.0,3)-- (1.0,3)-- (4,2)-- (1,2)-- (4,1)-- (1,1)-- (4,0);

\draw[color=uququq,fill=uququq,fill opacity=0.1] (0.75,7.25) -- (0.75,4.75) -- (1.25,4.75) -- (1.25,7.25) -- cycle;
\draw[color=uququq,fill=uququq,fill opacity=0.1] (0.75,4.25) -- (0.75,2.75) -- (1.25,2.75) -- (1.25,4.25) -- cycle;
\draw[color=uququq,fill=uququq,fill opacity=0.1] (0.75,2.25) -- (0.75,0.75) -- (1.25,0.75) -- (1.25,2.25) -- cycle;

\draw (-0,7) node[anchor=north west] {$b_1$};
\draw (-0,4) node[anchor=north west] {$b_2$};
\draw (-0,2) node[anchor=north west] {$b_3$};
\draw (4,7) node[anchor=north west] {$P_1$};
\draw (4,4.8) node[anchor=north west] {$P_2$};
\draw (1.2,.5) node[anchor=north west] {$P_3$};

\draw [dotted] (1.0,7.05)-- (4.0,7.05);
\draw [dotted] (1.0,6.0)-- (4.0,6.0);
\draw [dotted] (1.0,5.0)-- (4.0,5.0);
\draw [dotted] (1.0,4.0)-- (4.0,4.0);
\draw [dotted] (1.0,3.0)-- (4.0,3.0);
\draw [dotted] (1.0,2.0)-- (4.0,2.0);
\draw [dotted] (1.0,1.0)-- (4.0,1.0);

\draw [] (1.0,6.95)-- (4.0,6.95);
\draw [] (1.0,5.0)-- (4.0,6.0);
\draw [] (1.0,4.0)-- (4.0,5.0);
\draw [] (1.0,3.0)-- (4.0,2.0);
\draw [] (1.0,2.0)-- (4.0,1.0);
\draw [] (1.0,1.0)-- (4.0,0.0);

\begin{scriptsize}
\draw [fill=black] (1.0,1.0) circle (1.5pt);
\draw [fill=black] (1.0,2.0) circle (1.5pt);
\draw [fill=black] (1.0,3.0) circle (1.5pt);
\draw [fill=black] (1.0,4.0) circle (1.5pt);
\draw [fill=black] (1.0,5.0) circle (1.5pt);
\draw [fill=black] (1.0,6.0) circle (1.5pt);
\draw [fill=black] (1.0,7.0) circle (1.5pt);
\draw [fill=black] (4.0,7.0) circle (1.5pt);
\draw [fill=black] (4.0,6.0) circle (1.5pt);
\draw [fill=black] (4.0,5.0) circle (1.5pt);
\draw [fill=black] (4.0,4.0) circle (1.5pt);
\draw [fill=black] (4.0,3.0) circle (1.5pt);
\draw [fill=black] (4.0,2.0) circle (1.5pt);
\draw [fill=black] (4.0,1.0) circle (1.5pt);
\draw [fill=black] (4.0,0.0) circle (1.5pt);
\end{scriptsize}
\end{tikzpicture}
} & \subfloat[] {
\definecolor{uququq}{rgb}{0.6,0.6,0.6}
\begin{tikzpicture}[line cap=round,line join=round,>=triangle 45,x=1.0cm,y=1.0cm,scale=.65]
\clip(-0,-0.4) rectangle (3.2,8.0);

\draw[color=uququq,fill=uququq,fill opacity=0.1] (0.75,7.25) -- (0.75,4.75) -- (1.25,4.75) -- (1.25,7.25) -- cycle;

\draw (-0,7) node[anchor=north west] {$b_1$};

\draw [dotted] (1.0,7.05)-- (3.0,7.05);
\draw [dotted] (1.0,6.0)-- (3.0,6.0);
\draw [dotted] (1.0,5.0)-- (3.0,5.0);
\draw [dotted] (1.0,4.0)-- (3.0,4.0);

\draw [] (1.0,3.0)-- (3.0,2.0);
\draw [] (1.0,2.0)-- (3.0,1.0);
\draw [] (1.0,1.0)-- (3.0,0.0);

\begin{scriptsize}
\draw [fill=black] (1.0,1.0) circle (1.5pt);
\draw [fill=black] (1.0,2.0) circle (1.5pt);
\draw [fill=black] (1.0,3.0) circle (1.5pt);
\draw [fill=black] (1.0,4.0) circle (1.5pt);
\draw [fill=black] (1.0,5.0) circle (1.5pt);
\draw [fill=black] (1.0,6.0) circle (1.5pt);
\draw [fill=black] (1.0,7.0) circle (1.5pt);
\draw [fill=black] (3.0,7.0) circle (1.5pt);
\draw [fill=black] (3.0,6.0) circle (1.5pt);
\draw [fill=black] (3.0,5.0) circle (1.5pt);
\draw [fill=black] (3.0,4.0) circle (1.5pt);
\draw [fill=black] (3.0,3.0) circle (1.5pt);
\draw [fill=black] (3.0,2.0) circle (1.5pt);
\draw [fill=black] (3.0,1.0) circle (1.5pt);
\draw [fill=black] (3.0,0.0) circle (1.5pt);
\end{scriptsize}
\end{tikzpicture}
}
& \subfloat[] {\definecolor{uququq}{rgb}{0.6,0.6,0.6}
\begin{tikzpicture}[line cap=round,line join=round,>=triangle 45,x=1.0cm,y=1.0cm,scale=.65]
\clip(-0,-0.4) rectangle (3.2,8.0);

\draw[color=uququq,fill=uququq,fill opacity=0.1] (0.75,4.25) -- (0.75,2.75) -- (1.25,2.75) -- (1.25,4.25) -- cycle;

\draw (-0,4) node[anchor=north west] {$b_2$};

\draw [dotted] (1.0,3.0)-- (3.0,3.0);
\draw [dotted] (1.0,2.0)-- (3.0,2.0);
\draw [dotted] (1.0,1.0)-- (3.0,1.0);

\draw [] (1.0,6.95)-- (3.0,6.95);
\draw [] (1.0,5.0)-- (3.0,6.0);
\draw [] (1.0,4.0)-- (3.0,5.0);

\begin{scriptsize}
\draw [fill=black] (1.0,1.0) circle (1.5pt);
\draw [fill=black] (1.0,2.0) circle (1.5pt);
\draw [fill=black] (1.0,3.0) circle (1.5pt);
\draw [fill=black] (1.0,4.0) circle (1.5pt);
\draw [fill=black] (1.0,5.0) circle (1.5pt);
\draw [fill=black] (1.0,6.0) circle (1.5pt);
\draw [fill=black] (1.0,7.0) circle (1.5pt);
\draw [fill=black] (3.0,7.0) circle (1.5pt);
\draw [fill=black] (3.0,6.0) circle (1.5pt);
\draw [fill=black] (3.0,5.0) circle (1.5pt);
\draw [fill=black] (3.0,4.0) circle (1.5pt);
\draw [fill=black] (3.0,3.0) circle (1.5pt);
\draw [fill=black] (3.0,2.0) circle (1.5pt);
\draw [fill=black] (3.0,1.0) circle (1.5pt);
\draw [fill=black] (3.0,0.0) circle (1.5pt);
\end{scriptsize}
\end{tikzpicture}
}
& \subfloat[] {\definecolor{uququq}{rgb}{0.6,0.6,0.6}
\begin{tikzpicture}[line cap=round,line join=round,>=triangle 45,x=1.0cm,y=1.0cm,scale=.65]
\clip(-0,-0.4) rectangle (3.2,8.0);

\draw[color=uququq,fill=uququq,fill opacity=0.1] (0.75,2.25) -- (0.75,0.75) -- (1.25,0.75) -- (1.25,2.25) -- cycle;

\draw (-0,2) node[anchor=north west] {$b_3$};


\draw [] (1.0,6.95)-- (3.0,6.95);
\draw [] (1.0,5.0)-- (3.0,6.0);
\draw [] (1.0,4.0)-- (3.0,5.0);
\draw [] (1.0,3.0)-- (3.0,2.0);
\draw [] (1.0,2.0)-- (3.0,1.0);
\draw [] (1.0,1.0)-- (3.0,0.0);

\begin{scriptsize}
\draw [fill=black] (1.0,1.0) circle (1.5pt);
\draw [fill=black] (1.0,2.0) circle (1.5pt);
\draw [fill=black] (1.0,3.0) circle (1.5pt);
\draw [fill=black] (1.0,4.0) circle (1.5pt);
\draw [fill=black] (1.0,5.0) circle (1.5pt);
\draw [fill=black] (1.0,6.0) circle (1.5pt);
\draw [fill=black] (1.0,7.0) circle (1.5pt);
\draw [fill=black] (3.0,7.0) circle (1.5pt);
\draw [fill=black] (3.0,6.0) circle (1.5pt);
\draw [fill=black] (3.0,5.0) circle (1.5pt);
\draw [fill=black] (3.0,4.0) circle (1.5pt);
\draw [fill=black] (3.0,3.0) circle (1.5pt);
\draw [fill=black] (3.0,2.0) circle (1.5pt);
\draw [fill=black] (3.0,1.0) circle (1.5pt);
\draw [fill=black] (3.0,0.0) circle (1.5pt);
\end{scriptsize}
\end{tikzpicture}
}
\end{tabular}
\captionsetup{width=.83\textwidth}
\caption{ An example with 3 seminars, $b_1, b_2, b_3$.  
(a) Two assignments $M_S$ (dashed edges) and $M_T$ (dotted edges); the three alternating paths formed by $M_S \cup M_T$ (light gray).  $q(P_1) = b_1$ because it only intersects vertices from $V_{b_1}$; $q(P_2) = b_1$ because $P_2$ contains a vertex $V(M_T) \setminus V(M_S)$ that is in $V_{b_1}$;  $r(P_3) = b_2$.  (b), (c) and (d) assignments for seminar selections $S \oplus (b_1, 3)$, $S \oplus (b_2, 2)$ and $S \oplus (b_3, 2)$ combining edges of $M_S$ and $M_T$. 
}
\label{fig:path_partition}
\end{figure}
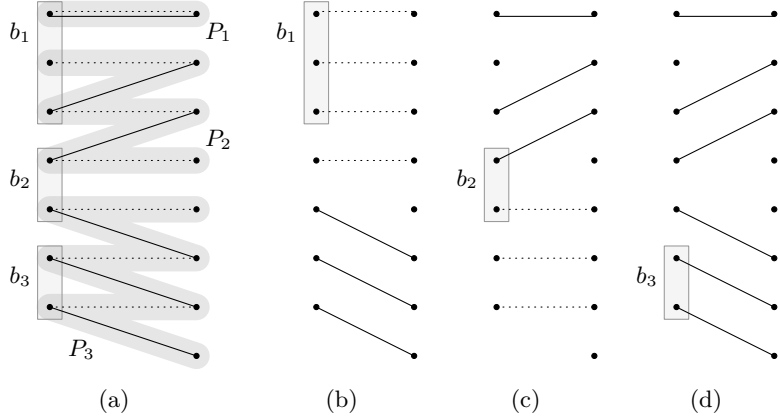

Let $q:\mathcal{C} \rightarrow B$ be a map of the isolated components to the seminars with the following properties: 
\begin{enumerate}
\item $q(P) \in \{b \in B: V(P) \cap V_b \neq \emptyset \}$;
\item if $P \in \calP_b $ for any $b \in B$, $q(P) = b$.
\end{enumerate}
Since $\calP_b$ are disjoint by the previous claim and since for any seminar $b$ it holds by definition that $V(P) \cap V_b \neq \emptyset$ whenever $P \in \calP_b$, it is clear that such a mapping $q$ exists.

For every $b \in B$ let $M_b$ be the matching of $G$ that uses all the edges of $M_T$ from the alternating paths $P \in \mathcal{C}$ mapped by $q$ to the seminar $b$, and all the edges of $M_S$ from the paths $P \in \mathcal{C}$ mapped by $q$ to some other seminar:
$$M_b=[ M_T \cap E(q^{-1}(b)) ]\cup [M_S \cap (E(\mathcal{C}) \setminus E(q^{-1}(b)) )].$$

Observe that any edge of $M_T$ belongs to at least one matching $M_b$ for some $b \in B$ and that any edge of $M_S$ belongs to all but one of the matchings $M_b$.  Therefore, 
$$\sum_{b \in B} \wght(M_b) \geq \wght(M_T) + (|B| - 1) \cdot \wght(M_S).$$

Moreover, observe that for each $b\in B$, $M_b$ is a matching in the bipartite representation of the seminar selection $S \oplus (b, T(b))$. Therefore $p(S \oplus (b, T(b))) = \wght(M_b)$ and the lemma follows.
\end{proof}

\begin{lemma}\label{lem:incremental}
Let $S$ and $T$ be two seminar selections such that $S \oplus (b, T(b))$ is feasible for every $b \in B$.  Let $S^* = \arg\max_{S' \in inc(S)} (p(S') - p(S)) / (c(S') - c(S))$.
Then it holds that: 
$$
\frac{p(S^*) - p(S)}{c(S^*) - c(S)} \geq \frac{p(T) - p(S)}{c(T)}.
$$
\end{lemma}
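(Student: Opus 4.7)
The plan is to combine Lemma~\ref{lem:difference} with the defining property of $S^*$ as the greedy-best element of $\mathrm{inc}(S)$. Lemma~\ref{lem:difference} provides the lower bound
$$p(T) - p(S) \;\leq\; \sum_{b \in B} \bigl[p(S \oplus (b, T(b))) - p(S)\bigr],$$
so it suffices to upper bound every term of this sum by $\rho \cdot T(b)$, where $\rho := (p(S^*) - p(S))/(c(S^*) - c(S))$. Summing then produces exactly $\rho \cdot c(T)$, which rearranges to the claim.

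To obtain the per-seminar bound I would split on whether $T(b) > S(b)$. If $T(b) \leq S(b)$, then $S \oplus (b, T(b)) = S$ by definition of $\oplus$, so the term is zero. If $T(b) > S(b)$, the selection $S \oplus (b, T(b))$ differs from $S$ only at seminar $b$ and is feasible by hypothesis, so it lies in $\mathrm{inc}(S)$; moreover its marginal cost relative to $S$ is $c(S \oplus (b, T(b))) - c(S) = T(b) - S(b)$. By the maximality of $S^*$ in the definition of the algorithm,
$$p(S \oplus (b, T(b))) - p(S) \;\leq\; \rho \cdot (T(b) - S(b)) \;\leq\; \rho \cdot T(b),$$
the last step using $S(b) \geq 0$. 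In both cases $p(S \oplus (b, T(b))) - p(S) \leq \rho \cdot T(b)$, as desired.

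Summing over $b \in B$ gives $\sum_{b \in B} [p(S \oplus (b, T(b))) - p(S)] \leq \rho \cdot c(T)$, and combining with Lemma~\ref{lem:difference} yields $p(T) - p(S) \leq \rho \cdot c(T)$, which is the stated inequality after rearrangement.

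I do not foresee a hard step: Lemma~\ref{lem:difference} has already done the real work of reducing a comparison with $T$ to a sum of single-seminar moves from $S$, and the greedy rule for $S^*$ was designed precisely to dominate each such move. The only minor point worth flagging is the relaxation $T(b) - S(b) \leq T(b)$, which is what produces the denominator $c(T)$ (rather than the tighter $c(T) - c(S)$) on the right-hand side of the statement; this looseness costs nothing in subsequent uses of the lemma, since in the typical application $S$ starts out as the empty selection.
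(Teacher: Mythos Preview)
Your proof is correct and follows essentially the same approach as the paper: both combine Lemma~\ref{lem:difference} with the cost bound $\sum_b [c(S\oplus(b,T(b)))-c(S)]\le c(T)$ and the greedy maximality of $S^*$. The only cosmetic difference is that the paper passes through the ratio of sums and then extracts a single seminar $b^*$ via averaging, whereas you bound each summand by $\rho\cdot T(b)$ directly and then sum; the two arguments are duals of one another.
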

\begin{proof}
By Lemma \ref{lem:difference} we have that 
\begin{align}\label{eq:bound_opt}
\sum_{b \in B} [p(S \oplus (b, T(b))) - p(S)] \geq p(T) - p(S).
\end{align}

Since
$\sum_{b \in B} [c(S \oplus (b, T(b))) - c(S)] \leq \sum_{b\in B} T(b) = c(T)$, inequality (\ref{eq:bound_opt}) implies that 
\begin{align}\label{eq:opt_selection}
\frac{\sum_{b \in B} [p(S \oplus (b, T(b))) - p(S)]}{ \sum_{b \in B} [c(S \oplus (b, T(b))) - c(S)]} 
\geq \frac{p(T) - p(S)}{c(T)}. 
\end{align}

Then, there exists at least one seminar $b^* \in B$ such that
\begin{align}\label{eq:single_opt_selection}
\frac{p(S \oplus (b^*, T(b^*))) - p(S)}{ c(S \oplus (b^*, T(b^*))) - c(S)} 
\geq \frac{p(T) - p(S)}{c(T)}. 
\end{align}

Since $S \oplus (b^*, T(b^*)))$ is clearly in $inc(S)$ the lemma follows directly from Eq. (\ref{eq:single_opt_selection}) and the definition of $S^*$.
\end{proof}

\begin{lemma}\label{lem:bound_si}
Let $T$ be a feasible seminar selection and let $r \in \mathbb{N}$ be such that $S_i \oplus (b, T(b))$ is feasible for every $i < r$ and $b \in B$.  Then for each $i \leq r$ the following holds: 
$$
p(S_i) - p(S_0) \geq \left[1 - \prod_{k=0}^{i-1} \left(1 - \frac{c(S_{k+1}) - c(S_k)}{c(T)} \right) \right] \cdot \Big(p(T)-p(S_0)\Big).
$$
\end{lemma}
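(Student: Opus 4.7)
The plan is to prove the statement by induction on $i$, using Lemma \ref{lem:incremental} as the engine that links the marginal gain at each step to the residual gap $p(T) - p(S_i)$.

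The base case $i = 0$ is immediate: the product $\prod_{k=0}^{-1}(\cdot)$ is empty and equals $1$, so the right-hand side is $0$, and the left-hand side $p(S_0) - p(S_0) = 0$ matches. For the inductive step, assuming the inequality holds at index $i < r$, I want to establish it at index $i+1$. Since by hypothesis $S_i \oplus (b, T(b))$ is feasible for every $b \in B$, Lemma \ref{lem:incremental} applies and yields
\begin{equation*}
\frac{p(S_{i+1}) - p(S_i)}{c(S_{i+1}) - c(S_i)} \;\geq\; \frac{p(T) - p(S_i)}{c(T)},
\end{equation*}
which rearranges to
\begin{equation*}
p(S_{i+1}) - p(S_i) \;\geq\; \frac{c(S_{i+1}) - c(S_i)}{c(T)} \cdot \bigl(p(T) - p(S_i)\bigr).
\end{equation*}

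Next I would add $p(S_i) - p(S_0)$ to both sides and regroup terms with respect to $p(T) - p(S_0)$ and $p(S_i) - p(S_0)$. Writing $\alpha_i = 1 - (c(S_{i+1}) - c(S_i))/c(T)$ for brevity, the inequality becomes
\begin{equation*}
p(S_{i+1}) - p(S_0) \;\geq\; (1-\alpha_i)\bigl(p(T) - p(S_0)\bigr) + \alpha_i\bigl(p(S_i) - p(S_0)\bigr).
\end{equation*}
Since $\alpha_i \in [0,1]$ (the marginal cost of one incrementing step cannot exceed $c(T)$ in the regime where the bound is non-trivial; in the degenerate case where it does, $p(S_{i+1})$ already dominates $p(T)$ and the conclusion holds directly), I may substitute the inductive hypothesis into the second term. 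This produces
\begin{equation*}
p(S_{i+1}) - p(S_0) \;\geq\; \left[(1-\alpha_i) + \alpha_i\left(1 - \prod_{k=0}^{i-1}\alpha_k\right)\right]\bigl(p(T) - p(S_0)\bigr),
\end{equation*}
and the bracket telescopes to $1 - \prod_{k=0}^{i}\alpha_k$, which is exactly the claim at index $i+1$.

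The main obstacle is bookkeeping rather than mathematical depth: one has to verify that $\alpha_i \geq 0$ so the weighted combination used above is a genuine convex-like interpolation, and separately dispatch the trivial case in which $p(S_i)$ has already surpassed $p(T)$ (so $p(T) - p(S_i) \leq 0$ makes Lemma \ref{lem:incremental} vacuous). Both subcases are handled by noting that when $p(S_i) \geq p(T)$ the claimed inequality is weaker than $p(S_i) - p(S_0) \geq p(T) - p(S_0)$ because $1 - \prod_{k}\alpha_k \leq 1$, and when $c(S_{i+1}) - c(S_i) > c(T)$ the ratio bound from Lemma \ref{lem:incremental} already forces $p(S_{i+1}) \geq p(T)$, again making the bound immediate.
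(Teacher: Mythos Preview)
Your proof is correct and follows essentially the same route as the paper: induction on $i$, with Lemma~\ref{lem:incremental} supplying the one-step gain inequality and the usual telescoping of the product. The only cosmetic differences are that the paper writes $\alpha_i = (c(S_{i+1})-c(S_i))/c(T)$ (the complement of your $\alpha_i$) and starts the induction at $i=1$; your treatment of the edge cases $\alpha_i<0$ and $p(S_i)\ge p(T)$ is actually more careful than the paper's, which silently assumes the non-degenerate regime.
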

\begin{proof}
We prove the lemma by induction on the iterations $i$.  By the definition of the algorithm, $S_1$ is the seminar selection with maximum marginal density in $inc(S_0)$, and thus Lemma \ref{lem:incremental} shows that the inequality holds for $i = 1$.  Suppose that the lemma holds for iterations $1, ..., i$.  We show that it also holds for iteration $i + 1$.  For ease of exposition, for the remainder of this proof let $\alpha_i = \frac{c(S_{i+1}) - c(S_i)}{c(T)}$.  

\begin{align*}
p(S_{i+1}) - p(S_0)&= p(S_i) - p(S_0) + p(S_{i+1}) - p(S_i) \\
		&\geq p(S_i) - p(S_0) + \alpha_i \cdot (p(T) - p(S_i)) \\
		&= (1 -  \alpha_i )p(S_i)   + \alpha_i \cdot p(T) - p(S_0)\\
		&\geq (1 -  \alpha_i )\cdot\left(1 - \prod_{k=0}^{i-1} \left(1 - \alpha_k \right) \right)(p(T) - p(S_0))\\
		&\quad \quad + (1 -  \alpha_i ) \cdot p(S_0) + \alpha_i \cdot p(T) - p(S_0)\\
		&= \left(1 - \alpha_i -  \prod_{k=0}^{i} \left(1 - \alpha_k \right) \right)(p(T) - p(S_0))\\
		&\quad \quad + \alpha_i \cdot (p(T) - p(S_0))\\
		&=\left(1 -  \prod_{k=0}^{i} \left(1 - \alpha_k \right) \right)(p(T) - p(S_0)).
\end{align*}	
Where the first inequality follows from Lemma \ref{lem:incremental} and the second inequality follows from the induction hypothesis.
\end{proof}

\begin{theorem}
When $S_0$ is the empty assignment the {\bf \textsc{Greedy}} algorithm is a $\frac{1}{2}\cdot \left(1 - e^{-1} \right)$ approximation for SAP.
\end{theorem}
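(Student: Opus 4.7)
The plan is to follow the Khuller--Moss--Naor template for budgeted greedy: show $p(\mathcal{A}_1)+p(\mathcal{A}_2)\ge(1-e^{-1})\cdot OPT$, from which $\max(p(\mathcal{A}_1),p(\mathcal{A}_2))\ge\tfrac{1}{2}(1-e^{-1})\cdot OPT$ is immediate. Let $T^*$ be an optimal feasible seminar selection with $OPT=p(T^*)$; since $S_0$ is empty, $p(S_0)=0$ and $c(S_0)=0$. As a preliminary step, extend the profit function to arbitrary selections $S$ by letting $\tilde p(S)$ denote the weight of a maximum-weight matching in the bipartite representation $G[V_S\cup I]$ without requiring $V_S$ to be saturated; $\tilde p$ agrees with $p$ on feasible selections, and the proof of Lemma~\ref{lem:difference} (whose matchings $M_b$ are constructed entirely inside the complete bipartite graph $G$ and never invoke feasibility) extends verbatim, so that $\sum_b [\tilde p(S\oplus(b,T(b))) - p(S)] \ge p(T) - p(S)$ holds for any feasible $S$ and $T$.

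At each iteration $i$, averaging this extended inequality against the weights $(T^*(b)-S_i(b))_+\le c(T^*)$ produces a seminar $b^{**}_i$ whose single-seminar increment attains the target ratio
\begin{equation*}
\frac{\tilde p(S_i\oplus(b^{**}_i,T^*(b^{**}_i)))-p(S_i)}{T^*(b^{**}_i)-S_i(b^{**}_i)}\ \ge\ \frac{p(T^*)-p(S_i)}{c(T^*)}.
\end{equation*}
Let $r^{**}$ be the smallest iteration at which every seminar achieving the target ratio is infeasible (i.e., every such $b$ has $S_{r^{**}}\oplus(b,T^*(b))$ violating the $|I|$ bound). If no such $r^{**}$ exists, then at every iteration a feasible witness $b^{**}_i$ can be chosen; it lies in $inc(S_i)$, so the greedy pick $S_{i+1}$ has ratio at least as large, and an induction mirroring Lemma~\ref{lem:bound_si} gives $p(S_i)\ge[1-\prod_{k<i}(1-\alpha_k)]\cdot OPT$ for $\alpha_k=(c(S_{k+1})-c(S_k))/c(T^*)$. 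At the terminal state, $inc(S_r)=\emptyset$ forces $p(T^*)\le p(S_r)$: otherwise the feasible witness at $r$ would have $T^*(b^{**}_r)>S_r(b^{**}_r)$ and thus $(b^{**}_r,T^*(b^{**}_r))\in inc(S_r)$, a contradiction; hence $p(\mathcal{A}_1)\ge OPT$.

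Otherwise $r^{**}\le r$, and the same induction up to $i=r^{**}$ yields $p(S_{r^{**}})\ge[1-\prod_{k<r^{**}}(1-\alpha_k)]\cdot OPT$. For the virtual step, fix an infeasible witness $b^\star$ at $S_{r^{**}}$ whose ratio meets the target, and set $\tilde S := S_{r^{**}}\oplus(b^\star,T^*(b^\star))$ with $\tilde\alpha_{r^{**}}:=(c(\tilde S)-c(S_{r^{**}}))/c(T^*)$. Infeasibility of $\tilde S$ gives $c(\tilde S)>|I|\ge c(T^*)$, so $\sum_{k\le r^{**}}\tilde\alpha_k>1$, and one more induction step using the ratio inequality for $b^\star$ yields
\begin{equation*}
\tilde p(\tilde S)\ \ge\ \Bigl[1-\prod_{k\le r^{**}}(1-\tilde\alpha_k)\Bigr]\cdot OPT\ \ge\ (1-e^{-1})\cdot OPT,
\end{equation*}
by $1-x\le e^{-x}$ applied termwise. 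Splitting any matching of $G[V_{\tilde S}\cup I]$ into edges incident to $V_{S_{r^{**}}}$ (weight $\le p(S_{r^{**}})$) and edges incident only to the newly added $b^\star$-slots (weight $\le p(\mathcal{A}_2)$, because $\mathcal{A}_2$ is the best single-seminar assignment) yields $\tilde p(\tilde S)\le p(S_{r^{**}})+p(\mathcal{A}_2)\le p(\mathcal{A}_1)+p(\mathcal{A}_2)$, closing the argument. The main obstacle is the clean bookkeeping around the extension $\tilde p$ together with a strengthened version of Lemma~\ref{lem:incremental} in which only the specific witness $b^{**}_i$ (rather than every seminar) needs to be feasible in order to establish the ratio inequality; once that is handled, the rest is an ordinary Khuller--Moss--Naor analysis adapted to the matching-based profit of Section~\ref{sec:greedy}.
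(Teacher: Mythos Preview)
Your argument is correct in outline and reaches the same final inequality $p(\mathcal{A}_1)+p(\mathcal{A}_2)\ge(1-e^{-1})\cdot OPT$, but it gets there by a genuinely different device than the paper. The paper never introduces an extension $\tilde p$ nor a virtual infeasible step. Instead, it removes from $OPT$ the single seminar $b^*$ with the \emph{largest number of students}, obtaining $OPT'$ with $c(OPT')\le |I|-OPT(b^*)$. This one move buys exactly the feasibility that you work so hard to engineer: for every $i$ with $c(S_i)\le c(OPT')$ and every $b$, one has $c(S_i\oplus(b,OPT(b)))\le c(OPT')+OPT(b^*)\le |I|$, so Lemma~\ref{lem:difference} and Lemma~\ref{lem:incremental} apply verbatim, with no need for $\tilde p$ or a strengthened lemma. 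The paper then takes $r$ to be the first iteration with $c(S_r)>c(OPT')$, applies Lemma~\ref{lem:bound_si} with $T=OPT'$, and combines $p(S_r)\ge(1-e^{-1})p(OPT')$ with $p(\mathcal{A}_2)\ge p_{opt}(b^*)$ to finish.

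Your route---keep $T^*=OPT$ intact, let the analysis run until a target-ratio witness becomes infeasible, take one virtual step to an overfull selection, then split the resulting matching into the $S_{r^{**}}$ part and the new-slot part---is the more literal transcription of the Khuller--Moss--Naor budgeted-coverage proof. It is valid here because the matching-based profit really does extend to infeasible selections (Lemma~\ref{lem:difference} only manipulates matchings in the ambient bipartite graph), and because the new-slot matching is dominated by the best feasible assignment of $T^*(b^\star)$ students to $b^\star$, hence by $p(\mathcal{A}_2)$. The tradeoff is that you carry the technical overhead of $\tilde p$, the case split on~$r^{**}$, and the splitting argument, whereas the paper's ``drop the fattest seminar'' trick collapses all of this into one line and keeps everything inside the feasible regime where $p$ is already defined.
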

\begin{proof}
Let $OPT$ be the seminar selection of a fixed optimal assignment solution for the given SAP instance.  Let $b^*\in B$ be the seminar that is allocated the most students in $OPT$ and let $OPT'$ be the seminar selection for which $OPT'(b^*) = 0$ and $OPT'(b) = OPT(b)$ for any $b\neq b^* \in B$.  Let $r$ be the first iteration of the algorithm for which $c(S_r) > c(OPT')$.  Clearly, $S_i \oplus (b, OPT(b))$ is feasible for every $i < r$ and $b \in B$.  Since $p(S_0) = 0$, by applying Lemma \ref{lem:bound_si} to iteration $r$ we obtain: 
\begin{align}\label{eq:bound2}
p(S_r) 	&\geq \left[1 - \prod_{k=0}^{r-1} \left(1 - \frac{c(S_{k+1}) - c(S_k)}{c(OPT')} \right) \right] \cdot p(OPT') \nonumber\\
			&\geq \left[1 - \prod_{k=0}^{r-1} \left(1 - \frac{c(S_{k+1}) - c(S_k)}{c(S_r)} \right) \right] \cdot p(OPT').
\end{align}

Observe that $c(S_r) = \sum_{k=0}^{r-1} c(S_{k+1}) - c(S_k)$ and that for any real numbers $a_0, ..., a_{r-1}$ with $\sum_{k=0}^{r-1} a_k = A$ it holds that:
\begin{align}\label{eq:e_bound}
\prod_{k=0}^{r-1}\left(1 - \frac{a_k}{A}\right) \leq \left(1 - \frac{1}{r}\right)^r < e ^{-1}.
\end{align}

Therefore Eq. (\ref{eq:bound2}) implies $p(S_r) > (1 - e^{-1})\cdot p(OPT')$.  Since the profit of $\mathcal{A}_2$ is at least $p(b^*, OPT(b^*))$ it holds that 
\begin{align*}
\mathcal{A}_1 + \mathcal{A}_2 &> (1 - e^{-1})\cdot p(OPT') + p(b^*, OPT(b^*)) \\
		&\geq (1 - e^{-1})\cdot p(OPT)
\end{align*}
and therefore either $\mathcal{A}_1$ or $\mathcal{A}_2$ has profit at least $\frac{1}{2}\cdot(1 - e^{-1}) p(OPT)$.
\end{proof}

\section{Improving the Approximation}\label{sec:greedy_plus}
In this section we show that the algorithm can be improved by starting the greedy algorithm not from an empty seminar selection, but from a seminar selection that is part of the optimal solution.  The improved algorithm is less efficient but achieves the optimal approximation ratio of $(1 - e^{-1})$.  Let $\calA_{opt}$ be an optimal seminar assignment and for any $b \in B$ let $p_{opt}(b)$ be the profit obtained in this assignment from seminar $b$: 
$$
p_{opt}(b) = \sum_{i \in \calA_{opt}^{-1}(b)} p(i, b).
$$
Clearly, the profit of the optimal solution is $\sum_{b\in B} p_{opt}(b)$.   W.l.o.g, let $b_1, b_2, b_3$ be the three seminars of the optimal solution with highest profit and let $S^*$ be a seminar selection such that $S^*(b) = OPT(b)$ if $b \in \{b_1, b_2, b_3\}$, and $S^*(b) = 0$ otherwise. 

\begin{theorem}
When $S_0 = S^*$ the {\bf \textsc{Greedy}} algorithm is a $\left(1 - e^{-1} \right)$-approximation for SAP.
\end{theorem}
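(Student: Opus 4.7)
My plan is to extend the preceding proof by exploiting two consequences of the stronger initial choice $S_0 = S^*$: any seminar of $OPT$ that the greedy ``misses'' has profit at most $p(S_0)/3$, and the backup assignment $\mathcal{A}_2$ is no longer needed to close the bound. I would first dispose of the trivial case in which $OPT$ is supported on at most three seminars, where $p(S_0) = p(OPT)$ already; otherwise, let $b^* \in \arg\max\{OPT(b) : b \notin \{b_1,b_2,b_3\}\}$ be the largest-allocation seminar of $OPT$ outside the top three by profit, and let $OPT'$ denote $OPT$ with $b^*$'s allocation zeroed, so that $c(OPT') = c(OPT) - OPT(b^*)$ and $p(OPT') \geq p(OPT) - p_{opt}(b^*)$. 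The new ingredient is the inequality
\[
p_{opt}(b^*) \;\leq\; p(S_0)/3,
\]
which follows from $p_{opt}(b^*) \leq p_{opt}(b_3) \leq \tfrac{1}{3}\sum_{i=1}^3 p_{opt}(b_i)$ together with $p(S_0) = p(S^*) \geq \sum_{i=1}^3 p_{opt}(b_i)$ (the maximum-weight matching defining $p(S^*)$ is at least as good as the matching given by $\calA_{opt}$ restricted to $\{b_1,b_2,b_3\}$).

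Next I would parallel the main calculation of the $\tfrac{1}{2}(1-e^{-1})$ theorem. Let $r$ be the first iteration with $c(S_r) > c(OPT')$. The feasibility check carries over verbatim: for $i < r$ and $b \notin \{b_1,b_2,b_3\}$, $c(S_i) + OPT(b) \leq c(OPT') + OPT(b^*) = c(OPT) \leq |I|$ by the maximality of $OPT(b^*)$; for $b \in \{b_1,b_2,b_3\}$, $S_i \oplus (b, OPT(b)) = S_i$ because $S_i(b) \geq S_0(b) = OPT(b)$. Lemma~\ref{lem:bound_si} with $T = OPT'$ then yields
\[
p(S_r) - p(S_0) \;\geq\; \Bigl[1 - \prod_{k=0}^{r-1}\Bigl(1 - \tfrac{c(S_{k+1})-c(S_k)}{c(OPT')}\Bigr)\Bigr]\bigl(p(OPT') - p(S_0)\bigr),
\]
and the same AM--GM manipulation used in the previous theorem (combining $c(S_r) > c(OPT')$ with the inequality $\prod_k(1 - a_k/A) \leq (1-1/r)^r < e^{-1}$) should give $p(S_r) - p(S_0) > (1 - e^{-1})(p(OPT') - p(S_0))$, which rearranges to
\[
p(S_r) \;>\; (1 - e^{-1})\bigl(p(OPT) - p_{opt}(b^*)\bigr) + e^{-1} p(S_0).
\]

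To finish, I would combine the key inequality $p_{opt}(b^*) \leq p(S_0)/3$ with the arithmetic fact $3/e \geq 1 - e^{-1}$ (equivalently $e - 1 < 3$) to get $e^{-1} p(S_0) \geq 3 e^{-1} p_{opt}(b^*) \geq (1 - e^{-1}) p_{opt}(b^*)$, and plugging this into the displayed inequality above yields $p(S_r) > (1 - e^{-1}) p(OPT)$. The step I expect to require the most care is the AM--GM manipulation producing the bound on $\prod_k(1 - \delta_k/c(OPT'))$: in the previous theorem $c(S_0) = 0$ made $\sum_k \delta_k$ coincide with the denominator $c(S_r)$ that the manipulation produces, whereas here $\sum_k \delta_k = c(S_r) - c(S_0)$, so one has to work slightly harder either in the choice of $r$ or in how the extra $c(S_0)$ term is absorbed before the AM--GM bound $(1 - 1/r)^r$ can be invoked.
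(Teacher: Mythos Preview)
Your proposal is correct and follows essentially the same route as the paper: remove from $OPT$ the seminar $b^*$ of largest allocation outside $\{b_1,b_2,b_3\}$, apply Lemma~\ref{lem:bound_si} with $T=OPT'$ at the first iteration $r$ with $c(S_r)>c(OPT')$, push the product below $e^{-1}$, and absorb the lost $p_{opt}(b^*)$ via $p(S^*)\ge 3\,p_{opt}(b^*)>e\,p_{opt}(b^*)$ (the paper uses the equivalent form $p(S^*)/e>p_{opt}(b^*)$ after first weakening $(1-e^{-1})p_{opt}(b^*)$ to $p_{opt}(b^*)$). The caveat you flag in your last paragraph is apt: the paper simply cites Eq.~(\ref{eq:e_bound}) verbatim at this step and does not address the fact that $\sum_k(c(S_{k+1})-c(S_k))=c(S_r)-c(S_0)$ rather than $c(S_r)$, so the extra care you anticipate is not supplied in the paper's own argument either.
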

\begin{proof}
Let $OPT$ be the seminar selection corresponding to $\calA_{opt}$.  Let $b^*$ be the seminar that is allocated the most students in $OPT$ and is not allocated students in $S^*$.  Moreover, let $OPT'$ be the seminar selection for which $OPT'(b^*) = 0$ and $OPT'(b) = OPT(b)$ for any $b\neq b^* \in B$.  Let $r$ be the first iteration of the algorithm for which $c(S_r) > c(OPT')$.  Clearly, the seminar selection $S_i \oplus (b, OPT(b))$ is feasible for every $i < r$ and $b \in B$.  By applying Lemma \ref{lem:bound_si} to iteration $r$ we obtain: 
\begin{align*}
p(S_r) - p(S^*) 	&\geq \left[1 - \prod_{k=0}^{r-1} \left(1 - \frac{c(S_{k+1}) - c(S_k)}{c(OPT')} \right) \right] \cdot \Big(p(OPT') - p(S^*)\Big) \nonumber\\
			&\geq \left[1 - \prod_{k=0}^{r-1} \left(1 - \frac{c(S_{k+1}) - c(S_k)}{c(S_r)} \right) \right] \cdot \Big(p(OPT') - p(S^*)\Big).
\end{align*}

By applying Eq. (\ref{eq:e_bound}) we obtain that 
\begin{align*}
p(S_r) - p(S^*)\geq (1 -1/e)\cdot \Big(p(OPT') - p(S^*)\Big) , 
\end{align*}
and therefore
\begin{align}
p(S_r) &\geq (1 - 1/e)\cdot p(OPT') + p(S^*) / e \nonumber \\
	  & \geq (1 - 1/e)\cdot p(OPT) - p_{opt}(b^*) + p(S^*) / e.
\end{align}

By hypothesis $S^*$ selects the three seminars with maximum profit in the optimal assignment and allocates exactly as many students to each as $OPT$ does.  Then, since $p_{opt}(b^*) \leq p_{opt}(b_i)$ for $i = 1, ..., 3$ it holds that $p(S^*) \geq 3\cdot p_{opt}(b^*) > e \cdot p_{opt}(b^*)$ and the theorem follows.
\end{proof}

Observe that the number of feasible seminar selections assigning students to at most three seminars is polynomial in the size of the input.  Therefore, by repeatedly calling the greedy algorithm with all possible such selections our main result follows:

\begin{corollary}
There exists a polynomial time $(1 - e^{-1})$-approximation algorithm for SAP.
\end{corollary}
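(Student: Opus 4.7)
The plan is to derandomize the choice of $S^*$ by exhaustive enumeration. The preceding theorem shows that \textsc{Greedy} returns a $(1-e^{-1})$-approximation provided it is initialized with the seminar selection $S^*$ that agrees with some fixed optimal assignment on its three highest-profit seminars. Since the optimal assignment $\calA_{opt}$ is not known in advance, I would instead run \textsc{Greedy} once for every seminar selection that could conceivably serve as $S^*$, and return the best of the resulting assignments.

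More precisely, I would let $\mathcal{T}$ be the collection of all feasible seminar selections $T$ with the property that $|\{b \in B : T(b) > 0\}| \leq 3$, and invoke the algorithm with $S_0 = T$ for each $T \in \mathcal{T}$. Correctness is immediate: fix any optimal assignment $\calA_{opt}$; the associated $S^*$ lies in $\mathcal{T}$ by construction, so by the preceding theorem the invocation with $S_0 = S^*$ already outputs a solution of profit at least $(1-e^{-1}) \cdot p(\calA_{opt})$. Taking the best output over all invocations can only do at least as well.

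The remaining task, and the only real obstacle, is to verify that $|\mathcal{T}|$ is polynomial in the input size. I would bound this by counting the choice of the (at most three) supporting seminars and the allocation assigned to each independently. There are at most $O(|B|^3)$ ways to select an unordered set of at most three seminars from $B$. For each chosen seminar $b$, the allocation $T(b)$ must lie in $K_b$, and by the feasibility constraint $\sum_{b'} T(b') \leq |I|$ it suffices to enumerate values in $K_b \cap \{1, \dots, |I|\}$; there are at most $|I|$ such values per seminar. Combining these bounds yields $|\mathcal{T}| = O(|B|^3 \cdot |I|^3)$, which is polynomial. Since each individual run of \textsc{Greedy} is polynomial and the initial feasibility check for each candidate $T$ is trivial, the overall algorithm runs in polynomial time and the corollary follows.
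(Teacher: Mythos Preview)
Your proposal is correct and follows essentially the same approach as the paper: enumerate all feasible seminar selections supported on at most three seminars, run \textsc{Greedy} from each, and return the best outcome. The paper states the polynomial bound on the number of such selections as a one-line observation, whereas you spell out the $O(|B|^3\cdot|I|^3)$ count explicitly, but the argument is the same.
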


\newpage
\bibliographystyle{plain}	
\bibliography{bibliography}

\end{document}